\newtheorem{theorem}{Theorem}[section]
\newtheorem{lemma}[theorem]{Lemma}
\theoremstyle{definition}
\newtheorem{definition}[theorem]{Definition}
\theoremstyle{remark}
\newtheorem{remark}[theorem]{Remark}
\newtheorem{example}[theorem]{Example}
\newenvironment{stochprocess}[1]{
	
	\model
}{\endmodel}
\newcommand{\RR}{\mathbb{R}}
\newcommand{\EE}{\mathbb{E}}
\newcommand{\CC}{\mathbb{C}}
\newcommand{\NN}{\mathbb{N}}
\newcommand{\ZZ}{\mathbb{Z}}
\newcommand{\cC}{\mathcal{C}}
\newcommand{\cD}{\mathcal{D}}
\newcommand{\cK}{\mathcal{K}}
\newcommand{\eps}{\varepsilon}
\renewcommand{\phi}{\varphi}
\newcommand{\lam}{\lambda}
\newcommand{\indic}{\mathbf{1}}
\newcommand{\hh}{\mathfrak{h}}
\newcommand{\lr}{\left(}
\newcommand{\rr}{\right)}
\renewcommand{\le}{\left[}
\newcommand{\re}{\right]}
\DeclareMathOperator{\Tr}{Tr}
\DeclareMathOperator{\Id}{Id}
\DeclareMathOperator*{\supp}{supp}
\newcommand*\Diff[1]{\mathop{}\!\mathrm{d}#1}
\newcommand{\norm}[1]{\ensuremath{\left\|#1\right\|}}
\newcommand{\bes}{\begin{equation*}}
\newcommand{\ees}{\end{equation*}}
\newcommand{\be}{\begin{equation}}
\newcommand{\ee}{\end{equation}}
\newcommand{\eqs}[1]{\begin{align*}#1\end{align*}}
\newcommand{\eq}[1]{\begin{align}#1\end{align}}
\renewcommand{\div}{\mathrm{div}}
\newcommand{\grad}{\nabla}
\renewcommand{\L}{\Lambda}
\newcommand{\Lip}{\mathrm{Lip}}
\pgfplotsset{compat=1.17}
\pgfplotsset{every axis/.append style={tick label style={font=\footnotesize}}}
\title[Wegner estimate for random divergence-type operators]{Wegner estimate for random divergence-type operators monotone in the randomness}
\subjclass[2010]{Primary 47B80; Secondary 35P15, 35J15, 35R60.}
\keywords{Random divergence-type operators, Wegner estimate, Eigenvalue lifting, Breather type.}
\date{\today}
\author[A.~Dicke]{Alexander Dicke}
\address{
	A.~Dicke,
	Fakult\"at f\"ur Mathematik, Technische Univer\-si\-t\"at Dortmund,
	D-44221 Dortmund, Germany
}
\email{alexander.dicke@mathematik.tu-dortmund.de}
\begin{document}
%
%
\begin{abstract}
	In this note, a Wegner estimate for random divergence-type operators that are monotone in the randomness is proven. 
	The proof is based on a recently shown unique continuation estimate for the gradient and the ensuing eigenvalue liftings. 
	
	The random model which is studied here contains quite general random perturbations, among others, 
	some that have a non-linear dependence on the random parameters.
\end{abstract}

\maketitle
%
%
\section{Introduction}

Random divergence-type operators, i.e.~second order elliptic operators where the second order term is random, were studied in, e.g.,~\cite{FigotinK-96, Stollmann-98, DickeV-20}.
The interest in these operators stems from the study of propagation of classical and electromagnetic waves in random media.
Here, as for random Schr{\"o}dinger operators, one suspects that disorder leads to localization. 
In fact, Anderson localization for random divergence-type operators was proven in~\cite{FigotinK-96, Stollmann-98} for certain random models.
The proofs in both papers are based on the so-called multi-scale analysis, a tool that was developed to prove localization for 
random Schr{\"o}dinger operators, cf., e.g.,~\cite{FroehlichS-83,HoldenM-84}.
In order to start it, one needs to prove an initial length scale and a Wegner estimate.
Historically, in~\cite{FigotinK-96, Stollmann-98}, the authors relied on a strict covering condition that was imposed on the perturbation to prove 
those two estimates. 

In the context of random Schr{\"o}dinger operators, it turns out that initial length scale and Wegner estimates can be deduced from suitable eigenvalue liftings.
As was observed in~\cite{DickeV-20}, this is also true for Wegner estimates in the case of random divergence-type operators.
Indeed, in the last mentioned paper, eigenvalue liftings for these operators were obtained and applied to prove a Wegner estimate that does not rely
on a covering condition.

The proof of the eigenvalue liftings in~\cite{DickeV-20} heavily depends on a quantitative unique continuation 
estimate \emph{for the gradient} of eigenfunctions of divergence-type operators. 
Since it is well known that unique continuation may fail if the operator has merely H{\"o}lder continuous coefficients, cf.~\cite{Plis-63, Miller-73, Mandache-98},
one is thus restricted to certain Lipschitz continuous perturbations. 

%
%
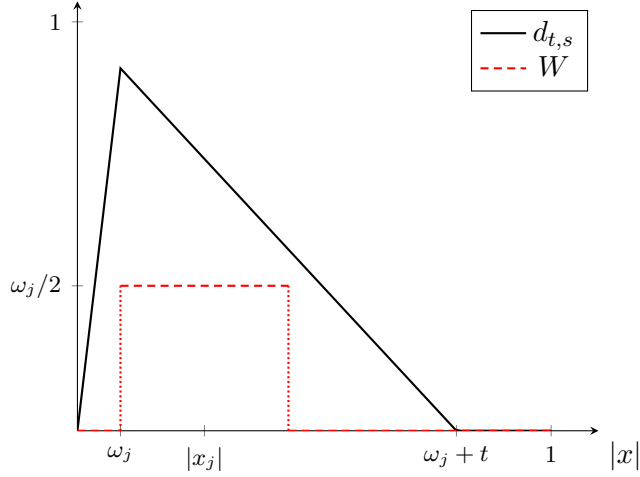
\begin{figure}
	\centering
	\begin{tikzpicture}
		\newcommand{\vartin}{10/8}
		\newcommand{\vart}{8/10}
		\newcommand{\varsin}{11}
		\newcommand{\vars}{1/11}
		\newcommand{\varrad}{.25*(\vart-\vars)}
		\newcommand{\varxzero}{\vars+\varrad}
		\begin{axis}[xlabel=$|x|$,xmin=0,xmax=1.1,ymin=0,ymax=1.05,axis x line=center,axis y line=center,ytick={{2*\varrad},1},xtick={0,\vars,\varxzero,\vart,1},
				xticklabels = {$0$, $\omega_j$, $|x_j|$, $\omega_j+t$, $1$},yticklabels = {$\omega_j/2$, $1$},xlabel style={below right},ylabel style={above right}]

			\addplot[thick,domain=0:1,samples=100,variable=\x] plot ({\x},{max(0,1-(\vartin*abs(\x)))-max(0,1-(\varsin*abs(\x)))});
			
			\addplot[red,thick,densely dashed,domain={\varxzero-\varrad}:{\varxzero+\varrad},variable=\x] plot ({\x},{.5*(\vart-\vars)});
			\addplot[red,thick,densely dashed,domain=0:{\varxzero-\varrad},variable=\x] plot ({\x},0);
			\addplot[red,thick,densely dashed,domain={\varxzero+\varrad}:1,variable=\x] plot ({\x},0);
			\addplot[red,thick,densely dotted,domain=0:{.5*(\vart-\vars)},variable=\y] plot({\varxzero-\varrad},{\y});
			\addplot[red,thick,densely dotted,domain=0:{.5*(\vart-\vars)},variable=\y] plot({\varxzero+\varrad},{\y});
			
			\addlegendentry{$d_{t,s}$}
			\addlegendentry{$W$}
				
		\end{axis} 
	\end{tikzpicture}
	\caption{The difference $d_{t,s}:=v(\cdot/(\omega_j+t))-v(\cdot/\omega_j)$ and the lower bound $W:=\frac{\omega_j}{2}\indic_{B(x_0,\omega_j/4)}$.}
	\label{fig:monotonicity}
\end{figure}
%
%
So far, Wegner estimates for random divergence-type operators were only proved for random perturbations depending linearly on the random parameters. 
In this paper, we study quite general random perturbations, especially some that depend on the random parameters in a \emph{non-linear} way.
In some sense, the missing linearity is replaced by a monotonicity in the randomness, see Section~\ref{sec:notation} below for a precise definition. 

Let us consider a particular non-linear model in order to illustrate the main result: 
Let $(\omega_j)_{j\in\ZZ^d}$ be a sequence of independent, uniformly distributed random variables on the interval $[1/4,3/4]$,
define the function $v(x):=(1-|x|)_+$ on $\RR^d$ and the operators

\[
	H_\omega:=-\div\le\lr 1+\sum_{j\in\ZZ^d} v((x-j)/\omega_j)\rr\grad\re
\]
in $L^2(\RR^d)$, whose dependence on the random variables $(\omega_j)_j$ is obviously non-linear.
However, the dilation of $v$ satisfies
\bes
	v(\cdot/(\omega_j+t))-v(\cdot/\omega_j)\geq \frac{\omega_j}{2}\,\indic_{B(x_j,\omega_j/4)},\quad t<1/4,
\ees
for some point $x_j\in\RR^d$, see the visualization in Figure~\ref{fig:monotonicity}.
According to Definition~\ref{def:div-monotone}, the operator is therefore \emph{monotone in the randomness}.

Denoting the restriction to a box of integer side length $L$ centered at the origin with Dirichlet boundary conditions by $H_\omega^L$,
the Wegner estimate for this special case reads as follows.

\begin{theorem}\label{thm:Wegner-special-case}
	There are constants $\tilde{\eps}>0$, $\Xi>1$, $\cK>0$ depending only on the dimension $d$, such that for all $0<E_-<E_+<\infty$, 
	all $L\in \NN$, all $\eps\in (0,\tilde{\eps}]$ and all $E>0$ satisfying $[E-3\eps,E+3\eps]\subset [E_-,E_+]$ we have
	\bes
		\EE\big(\Tr\big[\chi_{[E-\eps,E+\eps]}(H^L_\omega)\big]\big) 
		\leq \cK_d\,E_+^{d/2} \,\eps^{\le\Xi\cdot(1+|\log E_-|+E_+^{2/3})\re^{-1}} |\L_L|^2.
	\ees
\end{theorem}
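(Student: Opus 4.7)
The plan is to derive Theorem~\ref{thm:Wegner-special-case} as a corollary of a general Wegner estimate for operators that are monotone in the randomness (to be formulated via Definition~\ref{def:div-monotone}), so that the main work splits into three parts: (i) verifying the monotonicity of the concrete model, (ii) importing from~\cite{DickeV-20} the quantitative eigenvalue lifting for divergence-type operators, and (iii) combining this lifting with a spectral averaging argument of Stollmann type against the uniform distribution on $[1/4,3/4]$.

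First, I would check that the illustrated model fits the monotonicity framework. Figure~\ref{fig:monotonicity} and the accompanying inequality already give, for each $t\in(0,1/4)$ and each $j\in\ZZ^d$, a pointwise lower bound of the form
\bes
v\big(\cdot/(\omega_j+t)\big)-v(\cdot/\omega_j)\ge \tfrac{\omega_j}{2}\,\indic_{B(x_j,\omega_j/4)}\ge \tfrac18\,\indic_{B(x_j,1/16)},
\ees
uniformly in $\omega_j\in[1/4,3/4]$. Since the coefficient of $H_\omega$ is $1+\sum_j v(\cdot/\omega_j)$, this uniform single-site lower bound is exactly the monotonicity assumption: shifting any block of variables $\omega_j\mapsto \omega_j+t$ enlarges the coefficient by at least a positive indicator supported on a ball of fixed radius placed at a fixed distance from the sites of $\ZZ^d$. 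This enlargement is what the general framework needs, and I would spell this out in the precise form demanded by the definition.

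Next, I would invoke the eigenvalue lifting of~\cite{DickeV-20}, which relies on the quantitative unique continuation estimate for gradients of eigenfunctions of divergence-type operators. Applied to the Dirichlet restriction $H_\omega^L$, it yields that for any eigenvalue $E_n(\omega)\le E_+$, a shift $\omega\mapsto \omega+t\cdot \mathbf{1}$ moves the eigenvalue up by at least $\gamma(E_-,E_+)\,t$, with an explicit lower bound
\bes
\gamma(E_-,E_+)\ge c_d\,\exp\!\big(-\Xi\cdot(1+|\log E_-|+E_+^{2/3})\big).
\ees
This is the source of the unusual exponent in the final estimate: inverting the lifting to exclude eigenvalues from the window $[E-\eps,E+\eps]$ requires $t\gtrsim \eps/\gamma$, and once the lifting interval $t\in[0,t_0]$ is compared against the density bound coming from the uniform law of $\omega_j$, a standard partition argument converts this into the factor $\eps^{1/(\Xi(1+|\log E_-|+E_+^{2/3}))}$ rather than $\eps$ itself.

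Finally, I would combine the lifting with a Stollmann-type spectral averaging argument. Using the monotonicity in each coordinate, the number of eigenvalues of $H_\omega^L$ in $[E-\eps,E+\eps]$ can be bounded, after integration against the product of uniform densities on $[1/4,3/4]^{\L_L}$, in terms of the lifting amount and the Weyl-type trace bound $\Tr\chi_{(-\infty,E_+]}(H_\omega^L)\lesssim E_+^{d/2}|\L_L|$. The two factors of $|\L_L|$ arise from summing the single-site probabilistic estimate over the $|\L_L|$ active sites and from this trace bound, yielding the $|\L_L|^2$ in the stated inequality. The main obstacle is the third step: one must track the lifting constant through the quantitative unique continuation estimate carefully enough to recover the explicit $E_-$- and $E_+$-dependence displayed in the exponent, while the first two steps are essentially a verification and an application.
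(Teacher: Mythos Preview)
Your three-step plan---verify monotonicity, import the eigenvalue lifting of~\cite{DickeV-20}, then run a Stollmann-type averaging combined with a Weyl bound---is exactly the paper's route; indeed the paper states that the proof of Theorem~\ref{thm:Wegner-special-case} is that of Theorem~\ref{thm:Wegner} with the constants tracked, and Remark~\ref{rem:constant-Wegner-lifting} records the explicit form $\tau=\tilde N\,(1+|\log E_-|+E_+^{2/3})$ of the exponent.

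One detail in your sketch is off and worth correcting, because it is precisely the mechanism that produces the fractional power of $\eps$. The lifting is \emph{not} of the linear form $\gamma(E_-,E_+)\cdot t$: the displayed single-site lower bound in the introduction is only schematic, and for small $t$ the increment $v(\cdot/(\omega_j+t))-v(\cdot/\omega_j)$ degenerates to a bump whose height and width are of order $t$, so that in the language of Stochastic Process~\ref{mod:randomness} one has $p,q>0$. Feeding $\delta=\beta\mu^q$ into the unique continuation bound $(\delta/(4G))^{N'(1+E_+^{2/3})}$ and absorbing the prefactor $E_-^2$ yields a lifting $\lambda_n^L(\omega+\mu e)\ge\lambda_n^L(\omega)+\mu^\tau$ (Lemma~\ref{lem:Wegner-lifting}); inverting $\mu^\tau=4\eps$ gives the shift $\eps'=(4\eps)^{1/\tau}$, and it is this $\eps'$ that the averaging lemma turns into the factor in the final bound. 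A genuinely linear lifting would instead give a bound linear in $\eps$ with a large $E_\pm$-dependent constant, not the fractional power appearing in Theorem~\ref{thm:Wegner-special-case}, so your ``standard partition argument'' in step~(iii) cannot do what you claim.
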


The proof is the same as the one of our main result below, but here we even kept track of the dependence on $E_-$ and $E_+$, cf.~also Remark~\ref{rem:constant-Wegner-lifting} below.

\subsection{Outline}

The rest of this paper is organized as follows:
In the next Section~\ref{sec:notation}, we introduce the notation and the general random model studied in this article. 
Thereafter, in Section~\ref{sec:main-result}, we formulate our main result, Theorem~\ref{thm:Wegner}, and discuss some properties of our random model. 
The subsequent Section~\ref{sec:proof-Wegner} is then dedicated to the proof of the main result. 
%
%
\section{Notation and the random model}\label{sec:notation}

For $L>0$ we denote by $\L_L=(-L/2,L/2)^d$ the cube of side length $L$ centered at the origin. 
Let $A=(a_{j,k})_{j,k=1,\dots,d}\colon\L_L\to\mathrm{Sym}(\RR^d)$ be a matrix function. 
Then $A$ is uniformly elliptic on $\L_L$ if there is a constant $\vartheta_E\geq 1$ such that
\be
	\vartheta_E^{-1}|\xi|^2\leq \xi\cdot A(x)\xi \leq \vartheta_E|\xi|^2,\quad x\in\L_L,\,\xi\in\RR^d.
	\label{eq:Elliptic} \tag{Ellip} 
\ee
We will often assume that $A$ is Lipschitz continuous, i.e., there is a constant $\vartheta_L\geq 0$ such that
\be
	\norm{A(x)-A(y)}_\infty\leq \vartheta_L|x-y|,\quad x,y\in\L_L,
	\label{eq:Lipschitz} \tag{Lip}
\ee
and that $A$ satisfies
\be
	\forall j\neq k,\,x\in\overline{\L_L}\cap\overline{(\L_L+L e_k)}\colon a_{j,k}(x)=a_{k,j}(x)=0.
	\tag{\text{Dir}} \label{eq:DIR-assumption}
\ee

Given a matrix function $A$ that satisfies~\eqref{eq:Elliptic}, we will denote 
the unique self-adjoint operator associated to the form 
\bes
	\hh^L\colon H^1_0(\L_L)\times H^1_0(\L_L) \to \CC,\quad
	(u,v)\mapsto \int_{\L_L} \grad u \cdot A\overline{\grad v}
\ees
by $H^L(A)\colon L^2(\L_L)\supseteq\cD(H^L(A))\to L^2(\L_L)$.
It is well known that $H^L(A)$ has compact resolvent and therefore purely discrete spectrum. 
We will denote its eigenvalues by $(E_n^L(A))_{n\in\NN}$, enumerated non-decreasingly and counting multiplicities, and its spectral projector 
associated with an interval $[a,b]\subset\RR$ by $\chi_{[a,b]}(H^L(A))$.

Next, we define the stochastic process and the model we are working with.
 
\begin{stochprocess}{(P)} \label{mod:randomness}
	Let $(\omega_j)_{j\in\ZZ^d}$ be a sequence of independent random variables with probability distributions $(\kappa_j)_{j\in\ZZ^d}$ 
	that have uniformly bounded densities $(g_j)_{j\in\ZZ^d}$ satisfying $\supp g_j\subset [\omega_-,\omega_+]\subset [0,1)$. 
	We will denote by $J>0$ the uniform upper bound of the densities, that is $\norm{g_j}_\infty\leq J$ for all $j\in\ZZ^d$. 
	Moreover, let $u_j\colon[0,1]\times\RR^d\to[0,\infty)$, $j\in\ZZ^d$, be a sequence of non-negative, 
	bounded and measurable functions for which we assume that
	\begin{enumerate}[(a)]
		\item each $u_j$ satisfies
		\bes
			0\leq u_j(t,\cdot)\leq M\,\indic_{G\L_1(j)}(\cdot)
		\ees
		for some constants $M,G>0$ and all $j\in\ZZ^d$, $t\in [0,1]$,
		\label{enum:assumption-bounded}
		\item there exist constants $\alpha,\beta>0,p,q\geq 0$ such that for all $0\leq s< t\leq 1$ there is some point $x_0\in\RR^d$ (depending on $s,t$ and $j$) 
		such that 
		\bes
			u_j(t,\cdot)-u_j(s,\cdot)\geq \alpha (t-s)^p\, \indic_{B(x_0,\beta (t-s)^q)} 
		\ees
		and 
		\label{enum:assumption-monoton}
		\item for all $t\in[\omega_-,\omega_+]$ the functions $u_j(t,\cdot)$, $j\in\ZZ^d$, are Lipschitz continuous 
		with Lipschitz constant at most $K>0$.
		\label{enum:assumption-Lip}
	\end{enumerate} 
	We consider the stochastic process 
	\be
		V_\omega(x) := \sum_{j\in\ZZ^d} u_j(\omega_j,x),\quad x\in\RR^d.
		\label{eq:perturbation}
	\ee
\end{stochprocess} 

With this process at hand we are in the position to introduce the family of operators we are concerned with in this paper. 

\begin{definition}[Divergence-type operators monotone in the randomness] \label{def:div-monotone}
	Fix a matrix function $A$ that satisfies~\eqref{eq:Elliptic},~\eqref{eq:Lipschitz} and~\eqref{eq:DIR-assumption} and consider the Stochastic Process~\ref{mod:randomness}.
	Then the family of random divergence-type operators
	\bes
		H_\omega^L := H^L(A+V_\omega\Id)
	\ees
	is said to be monotone in the randomness.
\end{definition} 

We also recall the notion of equidistributed sequences studied 
in, e.g., \cite{RojasMolinaV-13, TaeuferV-15, NakicTTV-18}.
\begin{definition} \label{def:equidistributed-seq}
	Given $G>0$ and $\delta\in (0,G/2)$, we say that a sequence $Z=(z_j)_{j\in (G\ZZ)^d}\subset \RR^d$ is $(G,\delta)$-equidistributed, 
	if $B(z_j,\delta)\subset \L_G(j)$ for all $j\in (G\ZZ)^d$.
	We set
	\bes
		S_{\delta,Z}(L):=\bigcup_{\substack{j\in(G\ZZ)^d \\ \L_G(j)\subset\L_L}} B(z_j,\delta).
	\ees
\end{definition}
%
%
\section{Main result and discussion} \label{sec:main-result}

We henceforth assume that $H_\omega^L$ is a general random divergence-type operator monotone in the randomness as introduced in Definition~\ref{def:div-monotone}.
The next theorem is our main result. 

\begin{theorem}[Wegner estimate] \label{thm:Wegner}
	Let $0<E_-<E_+<\infty$. 
	There exist constants $C_W,\tilde{\eps}>0$, $\tau>1$, depending only on $\alpha$, $\beta$, $\vartheta_E$, $\vartheta_L$, $M$, $G$, $p$, $q$, $E_-$ and $E_+$,
	such that for all $L\in G\NN$, all $\eps\in (0,\tilde{\eps}]$ and all $E>0$ 
	satisfying $[E-3\eps,E+3\eps]\subset [E_-,E_+]$ we have
	\bes
		\EE\lr\Tr\le\chi_{[E-\eps,E+\eps]}(H_\omega^L)\re\rr \leq C_W \,\eps^{1/\tau} |\L_L|^2.
	\ees
\end{theorem}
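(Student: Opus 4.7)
The plan is to combine the eigenvalue lifting of \cite{DickeV-20}, which exploits the quantitative unique continuation estimate for the gradient of eigenfunctions of divergence-type operators, with a Stollmann-type spectral averaging argument. The first step is to quantify the effect of a \emph{simultaneous} shift of all random parameters relevant to $\L_L$ by a common amount $t>0$. Writing $\mathcal{I}$ for the index set of such parameters (so that $|\mathcal{I}|\sim|\L_L|$), assumption~\ref{enum:assumption-monoton} yields, for each $j\in\mathcal{I}$, a point $x_{0,j}\in\RR^d$ (depending on $\omega_j$, $t$ and $j$) such that
\[
V_{\omega+t\indic_{\mathcal{I}}}(x)-V_\omega(x)\;\geq\; \alpha t^p \sum_{j\in\mathcal{I}}\indic_{B(x_{0,j},\beta t^q)}(x).
\]
The support condition~\ref{enum:assumption-bounded} forces each $B(x_{0,j},\beta t^q)$ to lie inside $G\L_1(j)$, so the collection $(x_{0,j})$ constitutes a $(G,\beta t^q)$-equidistributed sequence in the sense of Definition~\ref{def:equidistributed-seq}, provided $\beta t^q<G/2$. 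Moreover, by~\ref{enum:assumption-Lip}, the matrix $A+V_\omega\Id$ is Lipschitz uniformly in $\omega$, which is the regularity required for the underlying unique continuation estimate to apply.

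Feeding this equidistributed lower bound into the eigenvalue lifting of \cite{DickeV-20} then produces, for every eigenvalue $E_n^L(\omega)\in[E_-,E_+]$, a quantitative shift of the form
\[
E_n^L(A+V_{\omega+t\indic_{\mathcal{I}}}\Id)\;\geq\; E_n^L(A+V_\omega\Id)+\hat c\,(\beta t^q)^{\Xi}\,\alpha t^p,
\]
where $\hat c>0$ and $\Xi>1$ depend on $d$, $\vartheta_E$, $\vartheta_L$, $M$, $G$ and $E_+$ but not on $L$ or $\omega$. Assumption~\ref{enum:assumption-monoton} further ensures that each map $\omega_j\mapsto H_\omega^L$ is operator-monotone, so each eigenvalue $E_n^L(\omega)$ is nondecreasing in every $\omega_j$; componentwise monotonicity together with the quantitative simultaneous lift is exactly the input needed for Stollmann's spectral averaging lemma.

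Choosing $t=t(\eps)$ such that $\hat c\,(\beta t^q)^\Xi\,\alpha t^p=2\eps$, i.e.\ $t\sim\eps^{1/(p+q\Xi)}$, Stollmann's lemma yields
\[
\mathbb{P}\bigl(E_n^L(\omega)\in[E-\eps,E+\eps]\bigr)\;\lesssim\; J\,|\mathcal{I}|\,t\;\lesssim\; J\,|\L_L|\,\eps^{1/(p+q\Xi)}
\]
for each fixed index $n$. Summing over the $O(E_+^{d/2}|\L_L|)$ indices $n$ whose eigenvalues may lie in $[E_-,E_+]$ (by Weyl's bound, uniform in $\omega$ since $V_\omega\geq 0$) then produces the desired estimate $\lesssim|\L_L|^2\eps^{1/\tau}$, with any $\tau\geq\max(p+q\Xi,1+\delta)$ for arbitrary $\delta>0$.

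The main obstacle is essentially technical. The lifting constant $\hat c$ and the exponent $\Xi$ in \cite{DickeV-20} deteriorate with $E_+$, and the shift $t$ must be kept small enough to respect both $\beta t^q<G/2$ and any further smallness hypotheses that the unique continuation estimate imposes; reconciling these constraints is what pins down the admissible threshold $\tilde\eps$ in the statement. Beyond this bookkeeping, the proof reduces cleanly to the two external tools---eigenvalue lifting and Stollmann's spectral averaging lemma---both already available in the literature.
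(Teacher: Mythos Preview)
Your outline is correct and follows the paper's own approach: simultaneous shift plus the eigenvalue lifting from \cite{DickeV-20}, followed by spectral averaging and a Weyl count (the paper uses a smearing function together with the telescoping lemma of \cite{HundertmarkKNSV-06} in place of Stollmann's lemma, and in its Lemma~\ref{lem:eigenvalue-lifting} it first sandwiches the indicator lower bound by a Lipschitz function so that \cite[Corollary~6.5]{DickeV-20} applies---both differences are cosmetic). The only slip is that the lifting constant $\hat c$ also carries a factor $E_-^2$, so it depends on $E_-$ as well as on $E_+$.
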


In contrast to Wegner estimates for random Schr\"odinger operators that were proven in, e.g.,~\cite{HundertmarkKNSV-06,NakicTTV-18}, 
our result has a quadratic (and thus not optimal) dependence on the volume of the cube $\L_L$.
This is due to the fact that certain spectral shift estimates, that were proven in~\cite{HundertmarkKNSV-06} for Schr{\"o}dinger operators, are not 
available for divergence-type operators.

Additionally, it should be noted that in Theorem~\ref{thm:Wegner} we not only need to remove high energies, but
also energies close to zero. 
The reason for that is, that zero is not a spectral fluctuation boundary for the random divergence-type operators; cf.~the discussion after
Theorem~1.1 in~\cite{Stollmann-98} and the dependence on $E_-$ that was obtained in Theorem~\ref{thm:Wegner-special-case} above.

\subsection{Discussion}\label{ssec:discussion}

Let us discuss certain aspects of our random model: To begin with, 
it should be noted, that a more general stochastic process was introduced and studied in~\cite{NakicTTV-18} for the model of random Schr{\"o}dinger operators
$\tilde{H}_\omega^L:=(-\Delta+V_\omega)|_{\L_L}$, where $V_\omega$ is as in~\eqref{eq:perturbation} above.
More precisely, in that case condition~\eqref{enum:assumption-Lip} is not needed since 
there the stochastic process influences the model as a family of random potentials rather than as a family of \emph{random perturbations}.

This difference has a major impact on the theory of the associated random operators: 
Rather then relying on unique continuation estimates, as in the case of random Schr\"odinger operators, one relies 
on unique continuation estimates for the gradient. 
As was already said, unique continuation for elliptic second-order operators fails in general, if the coefficients are not regular enough. 
This is less important when working with random potentials since in that case the regularity of the $u_j$'s is of no interest. 
However, in our model, the regularity of the $u_j$'s directly influences the regularity of the coefficients of the second-order term and 
one therefore has to assume a certain regularity for them. 

Nevertheless, even in this somehow restricted case, 
random divergence-type operators with random perturbations depending non-linearly 
on the random variables have, to the best of the authors knowledge, not been studied before. 

\begin{remark}\label{rem:Lipschitz}
	The assumption of Lipschitz continuity is not uncommon in the literature on breather-type models.
	Indeed, this assumption was also needed in certain papers that dealt with random Schr{\"o}dinger operators with 
	breather-type potentials, cf., e.g.,~\cite{KirschV-10, SchumacherV-17}, in which the authors study Lifshitz tails, or~\cite{CombesHM-96,CombesHN-01}, 
	wherein Wegner estimates are given.
	In all the mentioned papers, these restrictions are necessary, because the given non-linear model is linearized in a 
	certain way to use methods originally developed for the case of linear models.
	
	In contrast, in the present paper, the Lipschitz continuity is needed for the unique continuation estimate we invoke and the techniques 
	used in the proof of the Wegner estimate are based on the monotonicity from assumption~\eqref{enum:assumption-monoton}.
\end{remark} 

For more detailed information on the history of, e.g., the standard random breather model, we refer to the discussion in the appendix of~\cite{NakicTTV-18}.

\begin{remark}
	The model studied in~\cite{DickeV-20} corresponds to a special case of Model~\ref{mod:randomness}, cf.~Example~\ref{ex:perturbations}\,(a) below.
	However, for the sake of accuracy, we note that slightly more general random variables were studied there.
	Note also that the considerations in~\cite{Stollmann-98} are not restricted to scalar valued multiples of the identity matrix and are therefore not in scope of our model. 
	For more details we refer to the respective article.
\end{remark} 

\subsection{Examples}

We conclude this section by showing that the class of random perturbations feasible by the methods in this paper is non-void. 
To this end, we formulate two examples:

\begin{example}\label{ex:perturbations} 
	Let $v_j\colon\RR^d\to[0,\infty)$, $j\in (G\ZZ)^d$, be uniformly bounded, Lip\-schitz 
	continuous functions satisfying $\Lip(v_j)\leq K'$ for some constant $K'\geq 0$. 
	
	(i) Assume that $v_j\geq\eta\indic_{B(z_j,\delta)}$ for some $\eta>0$ and some $(G,\delta)$-equi\-distributed sequence $Z=(z_j)_{j\in (G\ZZ)^d}$.
		Set $u_j(t,x)=tv_j(x)$. 
		Then conditions~\eqref{enum:assumption-bounded},~\eqref{enum:assumption-monoton} and~\eqref{enum:assumption-Lip} are satisfied for the $u_j$'s
		and this choice corresponds to the case of alloy-type random perturbations.
		
	(ii) Define $u_j(t,x):=v_j(x/t)$ and $u(0,x):=0$. 
		Moreover, assume that $\{u_j(t,\cdot)\colon t\in[0,1]\}$ satisfies~\eqref{enum:assumption-monoton}. 
		Then, in general, we must have $\omega_->0$ in order to guarantee the Lipschitz assumption~\eqref{enum:assumption-Lip}.
		This choice corresponds to the case of general random breather perturbations.

	Note that the required Lipschitz continuity does not permit us to study the (probably most interesting) case $v_j=\indic_{B(z_j,\delta)}$ 
	and the assumption $\omega_->0$ in (ii) also rules out some cases of the general random breather model, where one is particularly interested in the case $\omega_-=0$. 
\end{example} 

For a particular choice of $(v_j)$ the assumptions in the previous example are indeed satisfied.

\begin{example}\label{ex:explicit}
	We set $v_j(x):=v(x-j)$, $j\in (G\ZZ)^d$ and chose $v(x)=v_r(x)=(1-|x/r|)_+$ for some $r\in (0,1)$.
	Then $v$ is Lipschitz continuous with $\Lip(v)=1/r$ and we have $v(x)\geq \frac{1}{2}\indic_{B(0,r/2)}$; 
	hence, the requirements of Example~\ref{ex:perturbations}\,(i) are satisfied. 
	
	We can also consider the situation of Example~\ref{ex:perturbations}\,(ii) for this choice of $v$.
	To this end, note that $v(\cdot/t)$ has Lipschitz constant $1/(tr)$ and thus $u_j(\omega_j,\cdot)=v(\cdot/\omega_j)$ 
	has Lipschitz constant at most $1/(\omega_-r)$ if $\omega_->0$.
	This also shows why we necessarily need to assume that $\omega_->0$. 
\end{example} 

For more examples we refer the reader to the appendix of~\cite{NakicTTV-18}.
%
%
\section{Proof of the Wegner estimate} \label{sec:proof-Wegner}

This section is devoted to the proof of the main result. 
The idea essentially goes back to~\cite{HundertmarkKNSV-06} and we will adapt it to suit our random operators.
In fact, one of the main ingredients is a suitable adaptation of an eigenvalue lifting from~\cite{DickeV-20}. 

\subsection{Proof of Theorem~\ref{thm:Wegner}}

Let $\mu_+:=1-\omega_+$ and let $Q$ be the set of indices such that $(\supp u_j)\cap\L_L=\emptyset$ for all $j\notin Q$; according to~\eqref{enum:assumption-bounded} that is
$Q:=\{j\in\ZZ^d\colon \L_G(Gj)\cap \L_L\neq\emptyset\}$ .
We set $e=(1,\dots,1)\in \{0,1\}^{\# Q}$ and 
\bes
	V_\omega^Q := \sum_{j\in Q} u_j(\omega_j,\cdot).
\ees
Note that $V_\omega^Q$ is the effective perturbation on $\L_L$ in the sense that $V_\omega-V_\omega^Q$ does not influence the operator $H_\omega^L$. 
Assumption~\eqref{enum:assumption-monoton} implies
\bes
	V_{\omega+\mu\cdot e}^Q-V_{\omega}^Q \geq \alpha\mu^p\,\sum_{j\in Q} \indic_{B(x_0(j), \beta\mu^q)},\quad \mu\leq \mu_+,
\ees
and since $B(x_0(j),\beta \mu^q)\subset G\L_1(j) =\L_G(Gj)$, there exists a $(G,\beta\mu^q)$-equi\-distributed sequence $Z=(z_k)_{k\in(G\ZZ)^d}$ such 
that
\bes
	\sum_{j\in Q} \indic_{B(x_0(j), \beta\mu^q)} \geq  \indic_{S_{\beta\mu^q,Z}(L)}.
\ees
Hence,
\be
	M \geq V_{\omega+\mu\cdot e}^Q-V_{\omega}^Q \geq \alpha\mu^p\, \indic_{S_{\beta\mu^q,Z}(L)},
	\label{eq:random-perturbation-difference}
\ee
since the random perturbation is uniformly bounded by $M$ and non-negative. 

We abbreviate $\lam_n^L(\omega):=E_n^L(A+V_\omega\Id)$, $n\in\NN$, and note that $A+V_\omega\Id$ is uniformly elliptic with ellipticity constant 
$\theta_E=\vartheta_E+M$, Lipschitz continuous with Lipschitz constant $\theta_L=\vartheta_L+K$ and satisfies~\eqref{eq:DIR-assumption}, 
since $V_\omega^Q$ is uniformly bounded and non-negative, Lipschitz continuous and the perturbation is a multiple of the identity matrix. 
In order to estimate the trace we choose a non-decreasing function $\rho=\rho_\eps\in\cC^\infty(\RR;[-1,0])$, $\eps>0$, satisfying $\norm{\rho'}_\infty\leq 1/\eps$,
$\rho\equiv -1$ on $(-\infty,-\eps)$ and $\rho\equiv 1$ on $(\eps,\infty)$ such that
\be
	\indic_{[E-\eps,E+\eps]}(x)\leq \rho(x +2\eps -E)-\rho(x-2\eps-E)
	\leq \indic_{[E-3\eps,E+3\eps]}(x) 
	\label{eq:smear-fct}
\ee
for $x\in\RR$. 
With this function at hand, the spectral theorem implies
\eq{
	\EE&\lr\Tr\le\chi_{[E-\eps,E+\eps]}(H_\omega^L)\re\rr\nonumber \\
	&\leq \EE\lr\sum_{n\in\NN}\le\rho(\lam_n^L(\omega)+4\eps -E -2\eps)-\rho(\lam_n^L(\omega)-E-2\eps)\re\rr
	\label{eq:smear}
}
and we further estimate~\eqref{eq:smear} using a refined version of the eigenvalue lifting from~\cite{DickeV-20}.
We rely on a refinement, since in the proof of Lemma~\ref{lem:Wegner-lifting} below
we need to handle perturbations that are not Lipschitz continuous and, at the same time, track the dependence on $\mu$. 

The result we use reads as follows.

\begin{lemma}\label{lem:eigenvalue-lifting}
	Let $B$ be a matrix function that satisfies~\eqref{eq:Elliptic},~\eqref{eq:Lipschitz} (with constants $\theta_E$ and $\theta_L$),~\eqref{eq:DIR-assumption} and
	let $G,\eta\geq 0$. 
	Then there are $\delta_0\in (0,G/2)$ and $N'>0$ depending only on $\theta_E$, $\theta_L$, $G$, $\eta$ and the dimension $d$, such that for all $L\in G\NN$, all
	$\delta\in(0,\delta_0]$, all $(G,\delta)$-equidistributed sequences $Z$, all $W\in L^\infty(\L_L)$ satisfying $W\geq \eta\delta\indic_{S_{\delta,Z}(L)}$,  all $0<E_-<E_+<\infty$
	and all $n\in\NN$ such that $E_-\leq E_n^L(B)\leq E_n^L(B+W\Id)\leq E_+$ we have
	\bes
		E_n^L(B+t\,W\Id)\geq E_n^L(B)+t\,E_-^2\eta\lr\frac{\delta}{4G}\rr^{N'\cdot (1+E_+^{2/3})},\quad t\in[0,1].
	\ees
\end{lemma}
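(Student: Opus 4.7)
My plan is to adapt the eigenvalue lifting strategy from~\cite{DickeV-20}, with the two refinements flagged in the excerpt: (i) the perturbation $W$ is only assumed to lie in $L^\infty(\L_L)$, so $B + tW\Id$ need not satisfy~\eqref{eq:Lipschitz}, and (ii) the dependence of the lifting on $\delta$ must be tracked explicitly so that it appears in the exponent of $\delta/(4G)$. The basic idea is to parametrize the family of operators by $t \in [0,1]$ and control how each eigenvalue $f_n(t) := E_n^L(B + tW\Id)$ increases with $t$.

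I would first treat the case that $W$ is Lipschitz continuous, so that $B + tW\Id$ satisfies~\eqref{eq:Elliptic}--\eqref{eq:DIR-assumption} for every $t \in [0,1]$. Then $f_n$ is Lipschitz in $t$ and non-decreasing (since $W \geq 0$), hence a.e.\ differentiable. At a.e.\ $t$ for which $f_n(t)$ is a simple eigenvalue with normalized eigenfunction $\psi_n(t) \in H^1_0(\L_L)$, the Feynman--Hellmann formula gives
\[
    f_n'(t) = \int_{\L_L} W\,|\grad\psi_n(t)|^2 \;\geq\; \eta\delta \int_{S_{\delta,Z}(L)} |\grad\psi_n(t)|^2,
\]
where eigenvalue crossings and multiplicities are handled in the standard way by passing to the trace of the Riesz projection on the relevant eigencluster. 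The key input is then the quantitative unique continuation estimate for the gradient from~\cite{DickeV-20}, applied to $\psi_n(t)$, which yields a bound of the form
\[
    \int_{S_{\delta,Z}(L)} |\grad\psi_n(t)|^2 \;\geq\; C_1\,E_n(t)\lr\frac{\delta}{4G}\rr^{K\cdot(1+E_+^{2/3})}
\]
with $K$ depending only on $\theta_E,\theta_L,G,d$; the factor $E_n(t) \geq E_-$ reflects the comparability of $\|\grad\psi_n(t)\|_2^2$ with $E_n(t)$ via the quadratic form identity $\hh^L_{B+tW\Id}(\psi_n,\psi_n) = E_n(t)$ and ellipticity. Combining the two displays, absorbing the outer factor $\delta$ into a slightly enlarged exponent, and integrating over $[0,t]$ yields the asserted lifting.

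To remove the Lipschitz restriction on $W$, I would approximate from below by $\tilde W := \eta\delta\,\phi$, where $\phi$ is a Lipschitz cutoff with $\indic_{S_{\delta/2,Z}(L)} \leq \phi \leq \indic_{S_{\delta,Z}(L)}$. Then $\tilde W \leq W$, so monotonicity of $H^L$ in its coefficient matrix gives $f_n(t) \geq E_n^L(B + t\tilde W\Id)$, and the Lipschitz case already established applies to the right-hand side with $\delta$ replaced by $\delta/2$, producing the stated bound after a harmless enlargement of $N'$ to absorb the resulting factor $2^{-K(1+E_+^{2/3})}$.

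The step I expect to be the main obstacle is the careful quantitative tracking of the constants in the gradient unique continuation estimate of~\cite{DickeV-20}. Both the exponent $N'\cdot(1+E_+^{2/3})$ of $\delta/(4G)$ and the $E_-^2$ prefactor must be extracted from the explicit form of that estimate; the qualitative version is not enough, so I would have to re-examine the Carleman-type argument behind it and propagate the $\delta$- and $E_+$-dependence through all intermediate steps.
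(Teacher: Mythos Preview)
Your approach is correct and essentially aligns with the paper's, including the Lipschitz approximation $\tilde W$ from below satisfying $\eta\delta\indic_{S_{\delta/2,Z}(L)}\leq\tilde W\leq\eta\delta\indic_{S_{\delta,Z}(L)}$ and the Min--Max comparison $E_n^L(B+tW\Id)\geq E_n^L(B+t\tilde W\Id)$. The main difference is organizational: where you propose to re-derive the Lipschitz-case lifting from scratch via Feynman--Hellmann and the gradient unique continuation estimate, the paper simply invokes \cite[Corollary~6.5]{DickeV-20}, which already packages that argument together with the quantitative UCP and delivers
\[
E_n^L(B+t\tilde W\Id)\geq E_n^L(B)+t\,E_-^2\eta\delta\lr\frac{\delta}{4G}\rr^{N\cdot(1+G^{4/3}E_+^{2/3})}
\]
with all constants tracked. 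Consequently, what you flag as the main obstacle---propagating the $\delta$- and $E_+$-dependence through the Carleman estimate---is already handled in the cited reference and need not be redone; the paper's proof reduces to the approximation step plus a harmless enlargement of $N'$ to absorb the extra factor $\delta$ and the $G^{4/3}$ in the exponent.
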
 

\begin{proof}
	There is a Lipschitz continuous function $\tilde{W}$ satisfying 
	\bes
		W\geq \eta\delta\indic_{S_{\delta,Z}(L)}\geq \tilde{W}\geq \eta\delta\indic_{S_{\delta/2,Z}(L)}
	\ees
	while having a Lipschitz constant $\Lip(\tilde{W})\sim \eta$. 
	Hence, applying~\cite[Corollary~6.5]{DickeV-20} with the matrix function $B$ and the perturbation $\tilde{W}$, there are $\delta_0\in (0,G/2)$ and $N>0$,
	depending only on $\theta_L$, $\theta_E$, $\eta$, $G$ and the dimension $d$, such that 
	\bes
		E_n^L(B+t\,\tilde{W}\Id)\geq E_n^L(B)+t\,E_-^2\eta\delta\lr\frac{\delta}{4G}\rr^{N\cdot (1+G^{4/3}E_+^{2/3})}
	\ees
	holds for all $t\in [0,1]$ (note that the second summand is smaller than the one we obtain from the corollary in~\cite{DickeV-20}).
	Using the Min-Max principle, we see that $E_n^L(B+t\,W\Id)\geq E_n^L(B+t\,\tilde{W}\Id)$, and this completes the proof when choosing $N'$ appropriately.
\end{proof} 

The lemma is used to prove the following statement.

\begin{lemma} \label{lem:Wegner-lifting}
	For all $n$ contributing to the right hand side of~\eqref{eq:smear} we have
	\bes
		\lam_n^L(\omega+\mu\cdot e) \geq \lam_n^L(\omega)+\mu^\tau,\quad\mu\leq\mu',
	\ees
	where $\mu'\in (0,1)$ is small enough (depending only on $\alpha$, $\beta$, $\vartheta_E$, $\vartheta_L$, $M$, $K$, $G$, $q$ and $p$) and
	$\tau>1$ is a large constant (depending only on $\alpha$, $\beta$, $\vartheta_E$, $\vartheta_L$, $M$, $K$, $G$, $p$, $q$, $E_-$ and $E_+$).
\end{lemma}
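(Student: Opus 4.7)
The plan is to apply Lemma~\ref{lem:eigenvalue-lifting} with matrix function $B := A + V_\omega\Id$ and perturbation $W := V_{\omega+\mu\cdot e}^Q - V_\omega^Q$, so that $B + W\Id$ is the matrix associated with $H_{\omega+\mu\cdot e}^L$ on $\L_L$, and to read off the conclusion at $t = 1$. First I would note that $B$ satisfies the hypotheses of that lemma with the constants $\theta_E = \vartheta_E + M$ and $\theta_L = \vartheta_L + K$ already recorded in the paragraph above the lemma: $V_\omega$ is bounded by $M$ (the supports $\L_G(Gj)$ are essentially disjoint and each $u_j$ vanishes on their boundaries by Lipschitz continuity), Lipschitz with constant $K$ by~\eqref{enum:assumption-Lip}, and purely scalar, so that~\eqref{eq:DIR-assumption} is inherited from $A$.

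Second, I would match~\eqref{eq:random-perturbation-difference} to the template $W \geq \eta\delta\indic_{S_{\delta,Z}(L)}$ required by Lemma~\ref{lem:eigenvalue-lifting}. The crucial point is to fix $\eta$ (rather than $\delta$) so that the constants $\delta_0$ and $N'$ produced by the lemma do not depend on $\mu$. Concretely, I would set $\eta := 1$, $r := \max(p, q)$, $c := \min(\alpha, \beta)$, and $\delta := c\mu^r$. Then $\delta \leq \beta\mu^q$ ensures both that $Z$ is $(G, \delta)$-equidistributed and that $S_{\delta, Z}(L) \subset S_{\beta\mu^q, Z}(L)$, while $\eta\delta = c\mu^r \leq \alpha\mu^p$ turns~\eqref{eq:random-perturbation-difference} into the required lower bound on $W$; for all $\mu$ below some threshold $\mu'$, also $\delta \leq \delta_0$.

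Before invoking the lemma, I would upgrade the bound $\lam_n^L(\omega) \leq E_+$, available for $n$ contributing to~\eqref{eq:smear}, to the required $\lam_n^L(\omega+\mu\cdot e) \leq \tilde E_+$ via a quadratic-form comparison: since $W \leq M$ and $V_\omega \geq 0$, the estimate $\int W|\grad u|^2 \leq M\vartheta_E \int\grad u\cdot B\grad\bar u$ together with the Min-Max principle yields $\lam_n^L(\omega+\mu\cdot e) \leq (1+M\vartheta_E)E_+ =: \tilde E_+$. Plugging the chosen data into Lemma~\ref{lem:eigenvalue-lifting} with $\tilde E_+$ in place of $E_+$ then produces a lift of the form $C_1 \mu^\sigma$ with $\sigma := rN'(1 + \tilde E_+^{2/3})$ and a constant $C_1 > 0$ depending only on $E_-, c, G, N', \tilde E_+$. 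Choosing $\tau := \sigma + 1$ (or any constant exceeding $1$ in the degenerate case $\sigma = 0$) and shrinking $\mu'$ so that $\mu' \leq C_1$ yields $C_1\mu^\sigma \geq \mu^\tau$ for $\mu \leq \mu'$, giving the claim with $\tau > 1$ depending only on the data listed in the statement.

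The main obstacle is the matching step itself: the naive choice $\delta = \beta\mu^q$, $\eta = (\alpha/\beta)\mu^{p-q}$ saturates~\eqref{eq:random-perturbation-difference} exactly but makes $\eta$ depend on $\mu$, so that $N'$ in Lemma~\ref{lem:eigenvalue-lifting} also depends on $\mu$ and $\tau$ ceases to be a constant. Collapsing both the \emph{height} $\alpha\mu^p$ and the \emph{width} $\beta\mu^q$ into a single scale $\delta = c\mu^{\max(p, q)}$ at a fixed $\eta$ is precisely what makes $\tau$ depend only on $\alpha, \beta, \vartheta_E, \vartheta_L, M, K, G, p, q, E_-, E_+$.
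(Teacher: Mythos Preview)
Your argument is essentially sound and close in spirit to the paper's, but differs in two pleasant ways. By collapsing both exponents into $r=\max(p,q)$ at a fixed $\eta=1$, you avoid the paper's case distinction $p\lessgtr q$: there the choice $\eta=\alpha/\beta$, $\delta=\beta\mu^q$ produces an auxiliary $W$ related to the actual lower bound by $W'=\mu^{p-q}W$, and Lemma~\ref{lem:eigenvalue-lifting} is invoked at $t=\mu^{p-q}$ when $p>q$. Your upper energy bound via the form inequality $B+W\Id\leq(1+M\vartheta_E)B$ is also more direct than the paper's route through Weyl-type counting for the Dirichlet Laplacian on $\L_L$.

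There is, however, one bookkeeping slip. In your final step you absorb the prefactor $C_1$ by \emph{shrinking} $\mu'$ to $\mu'\leq C_1$; but $C_1$ carries the factor $E_-^2$ (and, through the exponent, also $\tilde E_+$), so your $\mu'$ ends up depending on $E_-$ and $E_+$, contrary to the stated dependencies of the lemma. The fix is immediate and is exactly what the paper does: freeze $\mu'$ once and for all via the sole constraint $\delta=c\mu^r\leq\delta_0$ (which involves only $\alpha,\beta,G,\theta_E,\theta_L,p,q$), and instead \emph{enlarge} $\tau$. Since $C_1\mu^\sigma\geq\mu^\tau$ for all $\mu\leq\mu'$ as soon as $(\mu')^{\tau-\sigma}\leq C_1$, it suffices to take $\tau$ of the form $\sigma+\mathrm{const}\cdot(1+|\log C_1|)$ with a constant determined by $|\log\mu'|$; this $\tau$ then depends on $E_-,E_+$ only through $C_1$ and $\sigma$, as permitted, and is also the origin of the $|\log E_-|$ term in Remark~\ref{rem:constant-Wegner-lifting}.
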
 

\begin{proof}
	Let $B:=A+V_\omega^Q$ and $W':=\alpha\mu^p\indic_{S_{\beta\mu^q,Z}(L)}.$ 
	Inequality~\eqref{eq:random-perturbation-difference} implies 
	\bes
		V_{\omega+\mu\cdot e}^Q \geq W' + V_{\omega}^Q\quad\text{and}\quad  \lam_n^L(\omega+\mu\cdot e)\geq E_n^L(B+W'\Id)
	\ees
	by the Min-Max principle. 
	Hence, it suffices to show that there exists an appropriate constant $\tau>1$ depending only on 
	$\alpha$, $\beta$, $\vartheta_E$, $\vartheta_L$, $M$, $K$, $G$, $p$, $q$, $E_-$ and $E_+$, such that  
	\be
		E_n^{L}(B+W'\Id)\geq E_n^L(B)+\mu^\tau
		\label{eq:Wegner-lifting-proof}
	\ee
	for sufficiently small $\mu$.
	
	We aim to apply Lemma~\ref{lem:eigenvalue-lifting} to prove the latter estimate for all $n$ contributing to~\eqref{eq:smear}.
	To this end, we need to verify the assumptions of the last mentioned lemma.
	At first, we set $\delta=\beta\mu^q$, $\eta=\alpha/\beta$ and choose a slightly different function 
	\bes
		W:=\eta\delta\indic_{S_{\delta,Z}(L)} = \alpha\mu^q\indic_{S_{\beta\mu^q,Z}(L)}
	\ees
	that will act as our perturbation.
	Note that $W=\mu^{p-q}W'$.
	
	We verify that there are suitable upper and lower bounds for the energy interval in which the eigenvalues $E_n^L(B+t\,W\Id)$, $t\in[0,1]$, lie. 
	Observe that~\eqref{eq:smear-fct} and the assumption $[E-3\eps,E+3\eps]\subset [E_-,E_+]$ imply
	that only the eigenvalues $\lambda_n^L(\omega)\in[E_-,E_+]$ give a non-zero contribution in~\eqref{eq:smear}. 
	Thus, $E_n^L(B)=\lambda_n^L(\omega)\geq E_-$, which is the desired lower bound.
	For the upper bound we estimate
	\be
		E_n^L(B+W\Id)\leq E_n^L(B)+\alpha E_n^L(\Id) \leq E_++\alpha E_n^L(\Id).
		\label{eq:upper-bound-Wegner-lifting-est}
	\ee
	To further estimate the right hand side, we notice that $(E_n^L(\Id))_n$ are the eigenvalues of the Dirichlet-Laplacian on the cube $\L_L$. 
	For those, according to~\cite[p.~266]{ReedS-78}, it holds 
	\bes
		\#\{n\colon E_n^L(\Id)\leq \tilde{E}\} = \#\{m\in\NN^d\colon |m|<(2/\pi)L\tilde{E}^{1/2}\}
	\ees
	for all $\tilde{E}\geq 0$.
	This identity shows that for some constants $\cK_1,\cK_2>0$ depending only on 
	the dimension $d$, the lower bound $n(\tilde{E})$ and the upper bound 
	$N(\tilde{E})$ for the number of eigenvalues below $\tilde{E}$ satisfy 
	\be
		\cK_1 \tilde{E}^{d/2}L^d\leq n(\tilde{E})\leq N(\tilde{E})\leq\cK_2 \tilde{E}^{d/2}L^d.
		\label{eq:est-number-elements-Laplacian}
	\ee
	
	However, as already noted above, only the eigenvalues $\lambda_n^L(\omega)\in[E_-,E_+]$ give a non-zero contribution and we clearly have $\lambda_n^L(\omega)>E_+$ 
	if $\lambda_n^L(\omega)\geq \vartheta_E^{-1}E_n^L(\Id)>E_+$.
	At his point, the upper bound $N(\tilde{E})$ with $\tilde{E}=\vartheta_EE_+$ shows, that 
	we have $E_n^L(\Id)\geq \vartheta_EE_+$ if $n\geq N_0:=N(\vartheta_EE_+)$.
	Thus, at most the first $N_0$-summands in~\eqref{eq:smear} are non-zero and 
	\eqs{
		\EE&\lr\sum_{n\in\NN}\le\rho(\lam_n^L(\omega)+4\eps -E -2\eps)-\rho(\lam_n^L(\omega)-E-2\eps)\re\rr\nonumber \\
		&=\EE\lr\sum_{n=1}^{N_0}\le\rho(\lam_n^L(\omega)+4\eps -E -2\eps)-\rho(\lam_n^L(\omega)-E-2\eps)\re\rr.
	}
	On the other hand, using~\eqref{eq:est-number-elements-Laplacian}, there is some $E'_{+}$ such that 
	$n(E'_{+})>N_0$ and $E'_{+}$ can be chosen such that it satisfies $E'_{+}\leq \cK_3\cdot(1+E_+)$ for some constant $\cK_3$ 
	depending only on $\vartheta_E$ and on the dimension $d$.
	Moreover, since $n(E_+')$ is the lower bound for the number of eigenvalues $E_n^L(\Id)$ below $E_+'$, this yields $E_n(\Id)\leq E'_+$ for all $n\leq N_0$.
	Combining the latter with~\eqref{eq:upper-bound-Wegner-lifting-est} shows 
	\bes
		E_n^L(B+W\Id)\leq E_++\alpha E'_+\leq \cK_4(1+E_+)=:E_{++}
	\ees
	for some constant $\cK_4$ depending only on $\vartheta_E$, $\alpha$ and on the dimension $d$.
	
	We have successfully verified the assumption of Lemma~\ref{lem:eigenvalue-lifting} with $E_+$ replaced by $E_{++}$.
	Consequently, there are constants
	$\delta_0\in (0,G/2)$ and $N'>0$ depending only on $\alpha$, $\beta$, $\theta_E$, $\theta_L$, $G$, $M$, $p$, $q$ and the dimension $d$, such that
	\be
		E_n^L(B+t\,W\Id)\geq E_n^L(B)+t\,E_-^2\eta\lr\frac{\delta}{4G}\rr^{N'\cdot (1+E_{++}^{2/3})},\quad t\in[0,1],
		\label{eq:eigenvalue-lifting-Wegner}
	\ee
	provided that $\mu'\in (0,1)$ is chosen such that $\delta\leq\delta_0$ for $\mu\leq\mu'$.
	In order to conclude the proof we recall that $W'=\mu^{p-q}W$ and distinguish between two cases: 
	\begin{enumerate}[(i)]
		\item If $p\leq q$, then $\mu^{p-q}\geq 1$. Hence $W'\geq W$ and~\eqref{eq:eigenvalue-lifting-Wegner} with $t=1$ implies
		\bes
			E_n^L(B+W'\Id)\geq E_n^L(B)+E_-^2\eta\lr\frac{\beta\mu^q}{4G}\rr^{N'\cdot (1+E_{++}^{2/3})}.
		\ees
		\item If $p>q$, then $t:=\mu^{p-q}\in (0,1)$ and $W'=tW$. 
		With this choice of $t$ inequality~\eqref{eq:eigenvalue-lifting-Wegner} implies
		\bes
			E_n^L(B+W'\Id)\geq E_n^L(B)+\mu^{p-q}\,E_-^2\eta\lr\frac{\beta\mu^q}{4G}\rr^{N'\cdot (1+E_{++}^{2/3})}.
		\ees
	\end{enumerate} 
	Since $E_{++}=\cK_4(1+E_+)$, we can go back to $E_+$ by appropriately adapting the constant $N'$ and bringing everything together, 
	we have thus shown that~\eqref{eq:Wegner-lifting-proof} holds for some appropriate constant $\tau>1$ depending only on 
	$\alpha$, $\beta$, $\vartheta_E$, $\vartheta_L$, $M$, $K$, $G$, $p$, $q$, $E_-$ and $E_+$.
	This finishes the proof of the lemma.
\end{proof} 

\begin{remark} \label{rem:constant-Wegner-lifting}
	For some $\tilde{N}>0$ depending only on $\alpha$, $\beta$, $\vartheta_E$, $\vartheta_L$, $M$, $K$, $G$, $p$, and $q$, 
	the constant $\tau$ can be chosen as $\tau=\tilde{N}\cdot(1+E_+^{2/3}+|\log E_-|)$.
	This gives rise to the explicit constant in Theorem~\ref{thm:Wegner-special-case} above. 
\end{remark} 

With the lemma at hand, a simple calculation shows that with $\tilde{\eps}=(\mu')^\tau/4$ we have
\be
	\lam_n^L(\omega+\eps'\cdot e) \geq \lam_n^L(\omega)+4\eps\quad\text{if}\quad\eps'=( 4\eps)^{1/\tau}
	\label{eq:eigenvalue-lifting-choice-of-t}
\ee 
for all $\eps\leq\tilde{\eps}$.
Note that the definition of $\tilde{\eps}$ and the choice of $\eps'$ implies $\eps'\leq\mu'$.
We use inequality~\eqref{eq:eigenvalue-lifting-choice-of-t} to further estimate
\eqs{
	\EE\,\Bigg(\sum_{n\in\NN}&\le\rho(\lam_n^L(\omega)+4\eps -E -2\eps)-\rho(\lam_n^L(\omega)-E-2\eps)\re\Bigg)\\
	&\leq \EE\,\Bigg(\sum_{n\in\NN}\le\rho(\lam_n^L(\omega+\eps'\cdot e)-E-2\eps)-\rho(\lam_n^L(\omega)-E-2\eps)\re\Bigg)\\
	&=\EE\lr\Tr\le\rho(H^L_{\omega+\eps'\cdot e}-E-2\eps)-\rho(H^L_\omega-E-2\eps) \re \rr.
}

From this point forward the proof is essentially the same as the one given in, e.g.,~\cite{HundertmarkKNSV-06,NakicTTV-18}. 
For the sake of completeness, we nevertheless give the details: 
Let $r=(r_j)_{j=1,\dots,\#Q}$ be an enumeration of the lattice points $Q$.
For $\ell\in\NN$, $\mu\leq\mu_+$ and $s\in[\omega_-,\omega_+]$ we define 
\bes
	\nu^{(\ell)}(\mu,s)=\lr\nu^{(\ell)}_j(\mu,s)\rr_{j=1,\dots,\#Q}
\ees
recursively by 
\eqs{
	\nu^{(1)}_j(\mu,s):=\begin{cases}
		s & \text{if} \quad j = r_1\\
		\omega_j & \text{if} \quad j \neq r_1
	\end{cases},
	\enspace
	\nu^{(\ell)}_j(\mu,s):=\begin{cases}
		s & \text{if}\quad  j = r_\ell\\
		\nu^{(\ell-1)}_j(\mu,\omega_j+\mu) & \text{if}\quad  j \neq r_\ell
	\end{cases}
}
for each $j\in\{1,\dots\#Q\}$ and $\ell\in\{2,\dots,\#Q\}$. 
Then
\eqs{
	\EE(&\Tr\le\rho(H^L_{\omega+\eps'\cdot e}-E-2\eps)-\rho(H^L_\omega-E-2\eps) \re)\nonumber\\
	&\leq \sum_{\ell=1}^{\#Q} \EE\lr \Tr\lr\rho(H^L_{\nu^{(\ell)}(\eps',\omega_{r_\ell}+\mu)}-E-2\eps)
		-\rho(H^L_{\nu^{(\ell)}(\eps',\omega_{r_\ell})}-E-2\eps)\re\rr\nonumber \\
	&=\sum_{\ell=1}^{\#Q} \EE\lr\Phi_\ell(\omega_{r_\ell}+\eps')-\Phi_\ell(\omega_{r_\ell})\rr
	=: I
}
where 
\bes
	\Phi_\ell(s):=\Tr\le\rho(H^L_{\nu^{(\ell)}(\eps',s)}-E-2\eps)\re<0.
\ees

We estimate each summand of $I$ separately. 
To this end, we note that 
\eqs{
	\EE\lr\Phi_\ell(\omega_{r_\ell}+\eps')-\Phi_\ell(\omega_{r_\ell})\rr& \\
	=\EE^{Q\setminus\{r_\ell\}}&\lr \int \Phi_\ell(\omega_{r_\ell}+\eps')-\Phi_\ell(\omega_{r(\ell)})\Diff{\kappa_{r_\ell}(\omega_{r_\ell})} \rr,
}
where $\EE^{Q\setminus\{r_\ell\}}$ denotes the expectation with respect to all $\omega_j$, $j\in Q\setminus\{r_\ell\}$.
In order to estimate the right hand side, we invoke the following lemma, which is an easy consequence of, e.g.,~\cite[Lemma~6]{HundertmarkKNSV-06}. 

\begin{lemma}
	Let $0<\omega_-<\omega_+<1$, $\Phi\colon\RR\to (-\infty,0]$ be a non-decreasing, bounded function and assume that $\kappa$ 
	is a probability distribution with bounded density $g$ satisfying $\supp g\subset [\omega_-,\omega_+]$. 
	Then
	\bes
		\int_\RR \Phi(\lambda+\gamma)-\Phi(\lambda)\Diff{\kappa(\lambda)} \leq -\gamma \norm{g}_\infty \Phi(\omega_-)\quad\text{for all}\quad \gamma>0.
	\ees
\end{lemma}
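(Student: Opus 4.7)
My plan is to prove the lemma by a short three-step reduction: first use the upper bound on the density to replace $g$ by $\|g\|_\infty$ times an indicator, then exploit the support of $g$ to reduce to two small boundary integrals via a change of variables, and finally use that $\Phi \leq 0$ and is non-decreasing to control what remains.

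The key point in the first step is that the integrand $[\Phi(\lambda+\gamma)-\Phi(\lambda)]g(\lambda)$ is non-negative, since $\Phi$ is non-decreasing and $g\geq 0$. Because $g\leq \|g\|_\infty \indic_{[\omega_-,\omega_+]}$ pointwise, this yields
\bes
	\int_\RR \lr\Phi(\lambda+\gamma)-\Phi(\lambda)\rr g(\lambda)\Diff{\lambda}
	\leq \norm{g}_\infty \int_{\omega_-}^{\omega_+} \lr\Phi(\lambda+\gamma)-\Phi(\lambda)\rr \Diff{\lambda}.
\ees

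In the second step, I perform the substitution $\mu=\lambda+\gamma$ in the first summand inside the integral and then cancel the overlapping parts, leaving just the two boundary contributions
\bes
	\int_{\omega_-}^{\omega_+} \Phi(\lambda+\gamma)\Diff{\lambda} - \int_{\omega_-}^{\omega_+} \Phi(\lambda)\Diff{\lambda}
	= \int_{\omega_+}^{\omega_++\gamma}\Phi(\mu)\Diff{\mu} - \int_{\omega_-}^{\omega_-+\gamma}\Phi(\lambda)\Diff{\lambda}.
\ees
Since $\Phi\leq 0$, the first boundary integral is non-positive and can be dropped. The third step then uses monotonicity of $\Phi$: for $\lambda\geq\omega_-$ we have $\Phi(\lambda)\geq\Phi(\omega_-)$, hence $-\Phi(\lambda)\leq -\Phi(\omega_-)$, which integrated over $[\omega_-,\omega_-+\gamma]$ yields the upper bound $-\gamma\Phi(\omega_-)$. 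Chaining the three estimates gives exactly $-\gamma\norm{g}_\infty\Phi(\omega_-)$.

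There is no real obstacle here; the only place where care is needed is to track the signs correctly, since $\Phi$ is non-positive, and to ensure that the initial bound $g\leq \norm{g}_\infty\indic_{[\omega_-,\omega_+]}$ is applied to a pointwise non-negative integrand (which is where the monotonicity of $\Phi$ first enters). The lemma then is essentially just a restatement of \cite[Lemma~6]{HundertmarkKNSV-06}, whose argument is precisely this one.
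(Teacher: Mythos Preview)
Your proof is correct and is exactly the standard argument behind \cite[Lemma~6]{HundertmarkKNSV-06}, which is precisely what the paper invokes (the paper gives no independent proof and simply cites that lemma). The only minor point worth recording is that the identity in your second step holds for all $\gamma>0$, not just $\gamma\leq\omega_+-\omega_-$, by linearity of the integral; your subsequent estimates then go through verbatim.
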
 

Since the operators $H^L_{\nu^{(\ell)}(\eps',s)}$, $s\in [\omega_-,\omega_+]$, $\ell\in\{1,\dots,\#Q\}$, are non-negative and the real-valued
function $\Phi_\ell(\cdot)$ is bounded and non-decreasing, the assumptions of the lemma are satisfied for $\Phi=\Phi_\ell$ and $\kappa=\kappa_j$. 
Hence
\bes
	\int \Phi_\ell(\omega_{r_\ell}+\eps')-\Phi_\ell(\omega_{r(\ell)})\Diff{\kappa_{r_\ell}(\omega_{r_\ell})} \leq -J \eps' \Phi_\ell(\omega_-).
\ees
We estimate
\bes
	-\Phi_\ell(\omega_-) = \sum_{n\in\NN}(-\rho_\eps)(E_n^L(H^L_{\nu^{(\ell)}(\eps',\omega_-)})-E-2\eps) 
	\leq \#\{ E_n^L(H^L_{\nu^{(\ell)}(\eps',\omega_-)}) \leq E_+\}
\ees
and by Weyl asymptotics there is a constant $C_{E_+}>0$, depending only
on the dimension $d$, the ellipticity constant $\vartheta_E$ and $E_+$, such that 
\bes
	\#\{ E_n^L(H^L_{\nu^{(\ell)}(\eps',\omega_-)}) \leq E_+\}\leq C_{E_+} L^d.
\ees
We have thus shown that 
\be
	\EE\lr\Phi_\ell(\omega_{r_\ell}+\eps')-\Phi_\ell(\omega_{r_\ell})\rr \leq C_{E_+}J\eps' L^d.
	\label{eq:upper-bound-expectation}
\ee

In order to conclude the proof we first estimate each summand of $I$ using~\eqref{eq:upper-bound-expectation} which provides us with
\bes
	\EE\lr\Tr\le\chi_{[E-\eps,E+\eps]}(H_\omega^L)\re)\rr \leq C_{E_+}J \eps'  L^d (\#Q).
\ees
Now, we use that $\#Q\leq \cK_5L^d$ for some constant $\cK_5>0$ depending only on the dimension $d$ and that 
$\eps'=( 4\eps)^{1/\tau} \leq \cK_6\eps^{1/\tau}$
for some constant $\cK_6>0$, depending only on $\alpha$, $\beta$, $\vartheta_E$, $\vartheta_L$, $M$, $K$, $J$, $G$, $p$, $q$, $E_+$ and $E_-$.
Bringing everything together, we have thus proven the theorem.\hfill\qedsymbol
%
%
\section*{Acknowledgments}

The author thanks I.~Veseli\'c for suggesting this field of research as well as A.~Seelmann and I.~Veseli\'c for helpful comments on an earlier version of this manuscript.

%
%
\newcommand{\etalchar}[1]{$^{#1}$}
\def\polhk#1{\setbox0=\hbox{#1}{\ooalign{\hidewidth
  \lower1.5ex\hbox{`}\hidewidth\crcr\unhbox0}}}

\end{document}